\newif\ifFull
\newif\ifShort
\mathchardef\mhyphen="2D
\newcommand{\pos}{\mathsf{pos}}
\newcommand{\numsamples}{k}
\newcommand{\samplesize}{m}
\newcommand{\sampleparam}{q}
\newcommand{\sampleprob}{\gamma}
\newcommand{\cursize}{\mathsf{cursize}}
\newcommand{\samplefunc}{\mathsf{sample}}
\newcommand{\samplepass}{\mathsf{samples}}
\newcommand{\getsamplepos}{\mathsf{getsamplepos}}
\newcommand{\getsamplesize}{\mathsf{getsamplesize}}
\newcommand{\sample}{\mathsf{s}}
\newcommand{\dummy}{\mathsf{dummy}}
\newcommand{\key}{\mathsf{key}}
\newcommand{\Binom}{\mathsf{Binom}}
\newcommand{\True}{\mathsf{True}}
\newcommand{\False}{\mathsf{False}}
\newcommand{\append}{\mathsf{append}}
\newcommand{\alg}{\mathcal{A}}
\newcommand{\enc}{\mathsf{enc}}
\newcommand{\reenc}{\mathsf{re\mhyphen enc}}
\newcommand{\dec}{\mathsf{dec}}
\newcommand{\nexte}{e_\mathsf{next}}
\newcommand{\samplemember}{\mathsf{samplemember}}
\newcommand{\init}{\mathsf{initialize}}
\newcommand{\mech}{\mathcal{M}}
\newcommand{\eps}{\epsilon}
\newcommand{\Renyi}{R\'{e}nyi}
\newcommand{\poisson}{\mathsf{Poisson}}
\newcommand{\shuffle}{\mathsf{Shuffle}}
\newcommand{\oshuffle}{\mathsf{oblshuffle}}
\newcommand{\cd}{\mathcal{D}}
\newcommand{\swo}{\mathsf{SWO}}
\theoremstyle{definition}
\theoremstyle{lemma}
\newtheorem{lemma}{Lemma}
\begin{document}
\title{Oblivious Sampling Algorithms for Private Data Analysis\footnotemark[1]}

\author{Sajin Sasy\footnotemark[2] \footnotemark[4]\and Olga Ohrimenko\footnotemark[3] \footnotemark[4]}
\date{}

\renewcommand{\thefootnote}{\fnsymbol{footnote}}
\footnotetext[1]{Appeared in Advances in Neural Information Processing Systems 33 (NeurIPS 2019)}
\footnotetext[2]{University of Waterloo}
\footnotetext[3]{The University of Melbourne}
\footnotetext[4]{Work done while at Microsoft}

\maketitle

\begin{abstract}

We study secure and privacy-preserving data analysis
based on queries executed on samples from a dataset.
Trusted execution environments (TEEs) can be used to
protect the content of the data during query computation,
while supporting differential-private (DP) queries in TEEs
provides record privacy when query output is revealed.
Support for sample-based queries is attractive
due to \emph{privacy amplification}
since not all dataset is used to answer a query but only a small subset.
However, extracting data samples with TEEs
while proving strong DP guarantees is not
trivial as secrecy of sample indices has to be preserved.
To this end, we design efficient secure variants of common sampling algorithms.
Experimentally we show that accuracy of models
trained with shuffling and sampling is the same for
differentially private models for MNIST and CIFAR-10,
while sampling provides stronger privacy guarantees than shuffling.

\end{abstract}

\section{Introduction}

Sensitive and proprietary datasets (e.g., health,
personal and financial records, laboratory experiments,
emails, and other personal digital communication)
often come with strong privacy and access control
requirements and regulations that are hard to maintain and guarantee
end-to-end.
The fears of data leakage may block datasets from being used by data scientists
and
prevent collaboration
and information sharing between multiple parties towards a common good (e.g.,
training a disease detection model across data from multiple hospitals).
For example,
the authors of~\cite{DBLP:journals/corr/abs-1802-08232,DBLP:conf/ccs/FredriksonJR15,DBLP:conf/sp/ShokriSSS17} show that machine learning
models can memorize individual data records,
while information not required for the agreed upon learning task may be leaked in
collaborative learning~\cite{DBLP:journals/corr/abs-1805-04049}.
To this end, we are interested in designing the following secure data query
framework:
\begin{itemize}
\item A single or multiple \emph{data owners} contribute their datasets to the platform
while expecting strong security privacy guarantees on the usage of their data;
\item The \emph{framework} acts as a gatekeeper of the data
and a computing resource of the data scientist:
it can compute queries on her behalf while ensuring that data is protected from third parties;
\item \emph{Data scientist} queries the data via the framework
via a range of queries varying from approximating sample statistics
to training complex machine learning models.
\end{itemize}
The  goal of the framework is to allow
data scientist to query the data while providing strong privacy guarantees to
data owners on their data.
The framework aims to protect against
two classes of attackers: the owner of the computing infrastructure of the framework
and the data scientist.

The data scientist may try to infer more information
about the dataset than what is available through a (restricted) class of queries
supported by the framework. We consider the following two collusion scenarios.
As the framework may be hosted in the cloud
or on premise of the data scientist's organization,
the infrastructure is not trusted
as one can access the data
without using the query interface.
The second collusion may occur
in a multi-data-owner scenario
where the data scientist could
combine the answer of a query and data
of one of the parties to infer
information about 
other parties' data.
Hence, the attacker may have auxiliary information about the data.

In the view of the above requirements and threat model
we propose \emph{Private Sampling-based Query Framework}.
It relies on secure hardware to protect data content
and restrict data access.
Additionally, it supports sample-based
 differentially private queries for efficiency and privacy.
However, naive combination of these
components does not lead to an end-to-end secure system
for the following reason.
Differential privacy guarantees
for sampling algorithms (including machine learning
model training that build on them~\cite{abadi,DBLP:conf/iclr/McMahanRT018,dpmp_dl}) are satisfied only
if the sample is hidden.
Unfortunately as we will see this is not the case with
secure hardware
due to leakage of memory
access patterns.
To this end, we design novel algorithms for producing data samples
using two common sampling techniques, Sampling without replacement and Poisson,
with the guarantee
that whoever observes data access patterns cannot identify
the indices of the dataset used in the samples.
We also argue that if privacy of data during model training is a
requirement then sampling should be used instead
of the default use of shuffling since it incurs smaller
privacy loss in return to similar accuracy as we show experimentally.
We now describe components of our \emph{Private Sampling-based Query Framework}.

\paragraph{Framework security}
In order to protect data content and computation
from the framework host,
we rely on encryption and trusted execution environments (TEE).
TEEs can be enabled using secure hardware capabilities such
as Intel SGX~\cite{sgx} which provides a set of CPU instructions that
gives access to special memory regions (enclaves) where
encrypted data is loaded, decrypted and computed on.
Importantly access to this region is restricted and data is always encrypted in memory.
One can also verify the code
and data that is loaded in TEEs via attestation.
Hence, data owners can provide data encrypted
under the secret keys that
are available only to TEEs
running specific code (e.g., differentially private algorithms).
\ifFull
Privacy of the query code is not in scope
and we assume that algorithms running in TEE are public.
See~\cite{Ohrimenko2016,prochlo,sgx,vc3} for more details.
\fi
Some of the limitations of TEEs include resource sharing
with the rest of the system (e.g., caches, memory, network),
which may lead to side-channels~\cite{Brasser:2017:SGE:3154768.3154779,sgxcacheattacks,Osvik2006}.
Another limitation of existing TEEs
is the amount of available enclave memory
(e.g., Intel Skylake CPUs restrict the enclave page cache to 128MB).
Though one can use system memory,
the resulting memory paging does not only produce
performance overhead but also introduces
more memory side-channels~\cite{sgxsidechannels}.

\paragraph{Sample-based data analysis}
Data sampling has many applications in data analysis 
from returning an approximate query result to
training a model using mini-batch stochastic gradient descent (SGD).
\ifFull
Sampling can be used for
approximating the results
when performing
the computation on the whole dataset is expensive
or not necessary, for example, when one is only interested in a
sample-based estimation of a statistic of a dataset
or an estimate answer to more complex queries
such as  frequent itemsets mining~\cite{Riondato:2014:EDA:2663597.2629586} and graph analysis~\cite{Riondato:2014:FAB:2556195.2556224}.\footnote{We note that we use sampling differently
from statistical approaches that treat the dataset $\cd$
as a sample from a population and use all records in $\cd$
to estimate parameters of the underlying population.}
The output of sample-based queries
is often associated with a confidence interval:
a probability bound on the error between
the true and the estimated result. Sample size
is often used as a parameter of the interval
with larger samples giving smaller chance of error.
(For example,
a bound on the probability that a result is either over or under represented in the sample
can be computed using a large deviation bound such as the Chernoff bound~\cite{DBLP:books/daglib/0012859}).
We consider various uses of sampling, including queries that
require a single sample, multiple samples such as bootstrapping statistics,
or large number of samples such as training of a neural network.
\else
Sampling can be used for
approximating results
when performing
the computation on the whole dataset is expensive
 (e.g., graph analysis or frequent itemsets~\cite{Riondato:2014:FAB:2556195.2556224,Riondato:2014:EDA:2663597.2629586})
\footnote{We note that we use sampling differently
from statistical approaches that treat the dataset $\cd$
as a sample from a population and use all records in $\cd$
to estimate parameters of the underlying population.}
or not needed
(e.g., audit of a financial institution by a regulator
based on a sample of the records).
We consider various uses of sampling, including queries that
require a single sample, multiple samples such as bootstrapping statistics,
or large number of samples such as training of a neural network.
\fi

Sampling-based queries provide:~\emph{Efficiency:} computing on a sample is
faster than on the whole dataset,
which fits the TEE setting, and
can be extended to process dataset samples in parallel with multiple TEEs.
\emph{Expressiveness:} a
large class of queries can be answered approximately using samples,
furthermore sampling (or mini-batching) is at the core of training modern machine learning models.
\emph{Privacy:} a query result from a sample
reveals information only about the sample and not the whole dataset.
Though intuitively privacy may come with sampling,
it is not always true. If a data scientist
knows indices of the records in the sample
used for a query, then given the query result
they learn more about records in that sample
than about other records.
However if sample indices are hidden then there is plausible deniability.
Luckily, differential privacy
takes advantage of privacy from
sampling and formally captures
it with \emph{privacy amplification}~\cite{Beimel2014,learnprivately,
Li:2012:SAD:2414456.2414474}.

\paragraph{Differential privacy}
Differential privacy (DP) is a rigorous definition of individual privacy
when a result of a query on the dataset is revealed.
Informally, it states that a single record does not significantly change the result
of the query.
Strong privacy can be guaranteed in
return for a drop in accuracy for simple statistical queries~\cite{privacybook}
and complex machine learning models~\cite{abadi,Bassily:2014:PER:2706700.2707412,DBLP:conf/iclr/McMahanRT018,pmlr-v37-wangg15,dpmp_dl}.
DP mechanisms come with a parameter $\epsilon$,
where higher $\eps$ signifies a higher privacy loss.

Amplification by sampling is a well known result in differential privacy.
Informally, it says that when an $\eps$-DP mechanism is applied
on a sample of size $\gamma n$
from a dataset $\cd$ of size $n$, $\sampleprob < 1$, then
the overall mechanism is $O(\gamma\eps)$-DP w.r.t.~$\cd$.
Small $\eps$ parameters reported from training of neural networks using DP SGD~\cite{abadi,DBLP:conf/iclr/McMahanRT018,dpmp_dl}
make extensive use of privacy amplification in their analysis.
Importantly, for this to hold they all require the sample identity to be hidden.

DP algorithms mentioned above are set in the trusted curator
model where hiding the sample is not a problem
as algorithm execution is not visible to an attacker (i.e., the data scientist who obtains the result
in our setting).
TEEs can be used only as an approximation of this model
due to the limitations listed above:
revealing memory access patterns of
a differentially-private algorithm can be enough to violate
or weaken its privacy guarantees.
Sampling-based DP algorithms fall in the second category
as they make an explicit assumption that
the identity of the sample is hidden~\cite{renyisampl,47557}.
If not, amplification based results cannot be applied.
If one desires
the same level of privacy, higher level of noise
will need to be added which would in turn reduce the utility of
the results. 

Differential privacy is attractive since it can keep track of the privacy loss over multiple queries. Hence,
reducing privacy loss of individual queries and supporting more queries as
a result, is an important requirement.
Sacrificing on privacy amplification by revealing sample identity is wasteful.

\paragraph{Data-oblivious sampling algorithms}
Query computation can be supported in a TEE
since samples are small compared to the dataset and can fit into private
memory of a TEE.
However, naive implementation of data sampling algorithms
is inefficient (due to random access to memory outside of TEE)
and insecure in our threat model (since sample
indices are trivially revealed).
Naively hiding sample identity would be to read a whole
dataset and only keep elements whose
indices happen to be in the sample.
This would require reading the entire dataset
for each sample (training of models usually requires small samples, e.g.,  0.01\% of the dataset).
This will also not be competitive in performance with shuffling-based approaches used today.

To this end, we propose novel algorithms
for producing data samples for two popular sampling approaches:
sampling without replacement and Poisson. Samples produced by shuffling-based sampling 
contain distinct elements, however elements may repeat between the samples.
Our algorithms are called data-oblivious~\cite{Goldreich:1987:TTS:28395.28416}
since the memory accesses they produce are independent
of the sampled indices.
Our algorithms are efficient as they require
only two data oblivious shuffles and one scan to produce $n/\samplesize$
samples of size $\samplesize$ that is sufficient for one epoch
of training.
An oblivious sampling algorithm would be used as follows:
$n/\samplesize$ samples are generated at once,
stored individually encrypted, and then loaded in a TEE on a per-query request.

\paragraph{Contributions}
\textbf{(i)} We propose a Private Sampling-based Query Framework
for querying sensitive data;
{\bf(ii)} We use differential privacy to show that sampling algorithms are an important
building block in privacy-preserving frameworks;
{\bf (iii)} We develop efficient and secure (data-oblivious) algorithms for two common sampling
techniques;
{\bf (iv)} We empirically show that for MNIST and CIFAR-10
using sampling algorithms for generating mini-batches during
differentially-private training achieves the same accuracy as shuffling,
even though sampling incurs smaller privacy loss than shuffling.

\section{Notation and Background}

A dataset $\cd$ contains $n$ elements; each element $e$ has a key and a value; keys are distinct in $[1,n]$.
If a dataset does not have keys, we use its element index in the array representation of $\cd$ as a key.

\ifFull
\paragraph{Trusted Execution Environment}\label{sec:sgx}
TEE provides strong protection guarantees
to data that is loaded in its private memory: data inside of private memory is not visible
to an adversary who can control everything outside of the CPU,
e.g., even an attacker who controls the operating system (OS) or the VM.
Unfortunately, private memory of TEEs (depending on the side-channel threat
model) is restricted to CPU registers (few kilobytes) or caches (32Mb) or enclave page cache (92Mb).
Clearly these sizes will be significantly smaller then the datasets of interest
in our framework.
Since private memory is smaller than the dataset size,
any algorithm that operates on all dataset is required to store the dataset
in the external memory.
Since external memory is controlled by an adversary (e.g., an OS), it can observe
its content and the memory addresses requested from a TEE.

In order to protect the \emph{content}, data in external memory is stored
encrypted using probabilistic encryption.
That is, an adversary that is given the outputs
of the encryption function $\enc$ for $\enc(j)$ and $\enc(j)$,
cannot tell if they correspond to an encryption of the
same element or not.
We will refer to a dummy element as an element of the same size
and format as a real one. Hence, when encrypted,
an adversary cannot distinguish a real from dummy.

\emph{Addresses (or memory access sequence)} requested by a TEE
leak information about data.
Leaked information depends
on adversary's background knowledge (attacks based on
memory accesses have been shown for image and text processing~\cite{sgxsidechannels}).
In general many (non differential private and differentially private)
algorithms leak their access pattern including sampling (see \S\ref{sec:oblswo}).

\else
\textbf{Trusted Execution Environment}\label{sec:sgx}
(TEE) provides strong protection guarantees
to data in its private memory: it is not visible
to an adversary who can control everything outside of the CPU,
e.g., even if it controls the operating system (OS) or the VM.
The private memory of TEEs (depending on the side-channel threat
model) is restricted to CPU registers (few kilobytes) or caches (32MB) or enclave page cache (128MB).
Since these sizes will be significantly smaller than usual datasets,
an algorithm is required to store the data
in the external memory.
Since external memory is controlled by an adversary (e.g., an OS), it can observe
its \emph{content} and the \emph{memory addresses} requested from a TEE.
Probabilistic encryption can be used to protect the \emph{content}
of data in external memory: an adversary seeing two ciphertexts
cannot tell if they are encryptions of the same element or 
a dummy of the same size as a real element.

Though the size of primary memory is not sufficient to process a dataset,
it can be leveraged for sample-based data analysis queries as follows.
When a query requires a sample,
it loads an encrypted sample from the
external memory into the TEE, decrypts it, 
performs a computation (for example, SGD),
discards the sample,
and either updates a local state (for example,
parameters of the ML model maintained in a TEE)
and proceeds to the next sample,
or encrypts the result of the computation under data scientist's secret key and
returns it.

\emph{Addresses (or memory access sequence)} requested by a TEE
can leak information about data.
Leaked information depends
on adversary's background knowledge (attacks based on
memory accesses have been shown for image and text processing~\cite{sgxsidechannels}).
In general, many (non-differentially-private and differentially-private~\cite{DBLP:journals/corr/abs-1807-00736})
algorithms leak their access pattern including sampling (see~\S\ref{sec:oblswo}).
\fi

\ifFull
\paragraph{Data-oblivious algorithms}
These algorithms
can be seen as data-independent counterparts of
original algorithms that protect memory access patterns.
Data-oblivious algorithms access memory in a manner that appears to be independent
of the sensitive data.
For example, sorting networks are data-oblivious
as compare-and-swap operators access the same array indices
independent of the array content, in contrast to quick sort
that even reveals the order of the elements just through its runtime: $O(n)$ vs.~$O(n^2)$.
Data-oblivious algorithms have been designed for
array access~\cite{GoldreichO96,pathoram}, sorting~\cite{DBLP:conf/stoc/Goodrich14}, machine learning~\cite{Ohrimenko2016},
database querying~\cite{DBLP:conf/icdt/ArasuK14}.
The goal for any algorithm is to reduce number of accesses.
Our work is first to consider sampling~algorithms.

Our sampling algorithms in \S\ref{sec:oblsampl} will rely on an oblivious shuffle
$\oshuffle(\cd)$~\cite{melbshuffle}.
Recall that a shuffle is a process of rearranging elements
of a dataset according to some permutation function $\pi$. That is,
element at index $i$ can be found at location $\pi[i]$ after the shuffle.
Informally an oblivious shuffle rearranges
the dataset according to some permutation $\pi$
such that when the adversary observes the sequence
of accesses it cannot guess $\pi$ better than making a random access.

The performance (i.e., the number
of accesses it does to  external memory) of the shuffle
depends on the size of private memory~\cite{Goldreich:1987:TTS:28395.28416,DBLP:journals/corr/PatelPY17,prochlo}.
With private memory of $O(\sqrt[c]{n})$
the shuffle is produce with $O(cn)$ accesses,
hence, the overhead is constant for small $c$
(recall that non-oblivious shuffle requires $n$ accesses to place the elements
according to some permutation $\pi$).
As a consequence algorithms developed
in this paper inherit this dependency on private memory.
However, as we will see our oblivious sampling algorithms themselves
rely on $O(1)$ private memory.
\else
\textbf{Data-oblivious algorithms}
access memory in a manner that appears to be independent
of the sensitive data.
For example, sorting networks are data-oblivious
as compare-and-swap operators access the same array indices
independent of the array content, in contrast to quick sort.
Data-oblivious algorithms have been designed for
array access~\cite{Goldreich:1987:TTS:28395.28416,GoldreichO96,pathoram}, sorting~\cite{DBLP:conf/stoc/Goodrich14}, machine learning algorithms~\cite{Ohrimenko2016}
and several data structures~\cite{Wang:2014:ODS:2660267.2660314};
while this work is the first to consider sampling~algorithms.
The performance  goal of oblivious algorithms is to reduce the number of additional accesses to external
memory needed to hide real accesses.

Our sampling algorithms in \S\ref{sec:oblsampl} rely on an oblivious shuffle
$\oshuffle(\cd)$~\cite{melbshuffle}.
A shuffle rearranges elements
according to permutation $\pi$ s.t.
element at index $i$ is placed at location $\pi[i]$ after the shuffle.
An oblivious shuffle does the same except
the adversary observing its memory accesses
does not learn~$\pi$.
The Melbourne shuffle~\cite{melbshuffle}
makes $O(cn)$ accesses to external memory with private memory of size $O(\sqrt[c]{n})$.
This overhead is constant since non-oblivious shuffle need to make $n$ accesses.
Oblivious shuffle can use smaller private memory at the expense of more accesses
(see~\cite{DBLP:journals/corr/PatelPY17}).
It is important to note that while loading data into private memory,
the algorithm re-encrypts the elements
to avoid trivial comparison of elements before and after the shuffle.
\fi

\textbf{Differential privacy}
A randomized mechanism $\mech: D \rightarrow \mathcal{R}$ is $(\eps,\delta)$
differentially private~\cite{privacybook} if for any two neighbouring datasets
$\cd_0, \cd_1 \in D$  and for any subset of outputs $R \in \mathcal{R}$
it holds that $\Pr[\mech(\cd_0) \in \mathcal{R}] \le e^\eps \Pr[\mech(\cd_1) \in \mathcal{R}] + \delta$.
\ifFull
We consider 
substitute-one neighbouring relationship where $\cd_0$ and $\cd_1$ are of the same size
but are different in one element.
This relationship is natural for $\swo$ setting and data-oblivious setting
where an adversary knows the size
of the dataset.
In fact as we will see in \S\ref{sec:oblpoi}
it is hard to hide the size of Poisson sampling in
our setting and we choose to hide the number of samples instead.
\else
We use 
substitute-one neighbouring relationship where $|\cd_0| = |\cd_1|$
and $\cd_0, \cd_1$ are different in one element.
This relationship is natural for sampling without replacement and data-oblivious setting
where an adversary knows $|\cd|$.
As we see in \S\ref{sec:oblpoi}
hiding the size of Poisson sampling in
our setting is non-trivial and we choose to hide the number of samples instead.
\fi

Gaussian mechanism~\cite{privacybook} is a common way of
obtaining differentially private variant of real valued function $f: D \rightarrow \mathcal{R}$.
Let $\Delta_f$ be the $L_2$-sensitivity of $f$,
that is the maximum distance $\left\lVert  f(\cd_0) - f(\cd_1)\right\rVert_2$
between any $\cd_0$ and $\cd_1$.
Then, Gaussian noise mechanism is defined by $\mech(\cd) = f(\cd) + \mathcal{N}(0, \sigma^2)$
where $\mathcal{N}(0, \sigma^2\Delta_f^2)$ is a Gaussian distribution with mean 0 and standard deviation
$\sigma\Delta_f$. 
The resulting mechanism is $(\eps, \delta)$-DP
if $\sigma =  \sqrt{2\log (1.25/\delta)}/\eps$ for $\eps,\delta \in (0,1)$.

\ifFull
\paragraph{Query computation on a sample}
Sample computation with TEE proceeds as follows.
When a query requires a sample,
it loads an encrypted sample from the
external memory, decrypts it, 
performs a computation (for example, SGD),
discards the sample,
and either updates a local state (for example,
parameters of the ML model maintained in a TEE)
and proceeds to the next sample,
or encrypts the result of the computation under data scientist's key and
returns it.
\fi

\ifFull
\subsection{Sampling Algorithms}
\else
\textbf{Sampling methods}
\fi
Algorithms that operate on data samples
often require more than one sample.
For example, machine learning model training
proceeds in epochs where each epoch
processes multiple batches (or samples) of data.
The number of samples $\numsamples$ and sample size $\samplesize$ are usually chosen
such that $n \approx \numsamples \samplesize$ so that
every data element has a non-zero probability of being processed
during an epoch.
\ifFull

Algorithms that need data samples to operate
often require more than one sample.
For example, machine learning model training
proceeds in epochs where each epoch
processes multiple batches (or samples) of data.
The number of samples $\numsamples$ and sample size $\samplesize$ are usually chosen to be
such that $n \approx \numsamples \samplesize$ so that
every element of the data has a non-zero probability of being chosen
during an epoch.
To this end, we define
function $\samplepass_\alg(\cd, \sampleparam, \numsamples)$ that
produces samples $\sample_1, \sample_2, \ldots, \sample_\numsamples$.
We omit explicitly stating the randomness used in $\samplefunc_\alg$
but assume that every call uses a new seed.
\else
To this end, we define $\samplepass_\alg(\cd, \sampleparam, \numsamples)$ that
produces samples $\sample_1, \sample_2, \ldots, \sample_\numsamples$
using a sampling algorithm~$\alg$ and parameter $\sampleparam$,
where $\sample_i$ is a set of keys from $[1,n]$.
For simplicity we assume that $\samplesize$ divides $n$ and $\numsamples = n/\samplesize$.
We omit stating the randomness used in~$\samplepass_\alg$
but assume that every call uses a new seed.
We will now describe three sampling methods that
vary based
on element distribution within each sample and between the samples.
\fi
\ifFull

\paragraph{Sampling without replacement ($\swo$)}
Sampling without replacement produces a sample
by drawing $\samplesize$ distinct elements uniformly at random from a set $[1,n]$.
That is, the probability of a particular sample is $\frac{1}{n}\frac{1}{n-1}\ldots \frac{1}{n-\samplesize+1}$.
Let $\mathcal{F}_\swo^{n,\samplesize}$ be the
set of all $\swo$ samples of size $m$
from domain $[1,n]$.
Then $\samplefunc_\swo(\cd, \samplesize)$ returns a random
element of $\mathcal{F}_\swo^{n,\samplesize}$
and $\samplepass_\swo(\cd, \samplesize, \numsamples)$ draws $\numsamples$
samples
from $\mathcal{F}_\swo^{n,\samplesize}$
with replacement. That is, elements
cannot repeat within the same $\swo$ sample but can repeat between
the samples.
\else

{\emph{Sampling without replacement ($\swo$)}}
produces a sample
by drawing $\samplesize$ distinct elements uniformly at random from a set $[1,n]$,
hence probability of a sample $\sample$ is $\frac{1}{n} \frac{1}{n-1}\cdots \frac{1}{n-\samplesize+1}$.
Let $\mathcal{F}_\swo^{n,\samplesize}$ be the
set of all $\swo$ samples of size $m$
from domain $[1,n]$;
$\samplepass_\swo(\cd, \samplesize, \numsamples)$ draws $\numsamples$
samples
from $\mathcal{F}_\swo^{n,\samplesize}$
with replacement: elements
cannot repeat within the same sample but can repeat between
the samples.
\fi
\ifFull

\paragraph{$\poisson$ Sampling}
A Poisson sample $\sample$ is constructed
by independently adding each element from the set $[1,n]$ with probability $\sampleprob$,
that is $\Pr(j \in \sample) = \sampleprob$, $\forall j \in [1,n]$.
Hence, probability of a sample $\sample$ is $\Pr_\sampleprob(\sample)= \sampleprob^{|s|}(1-\sampleprob)^{n-|s|}$.
Let $\mathcal{F}_\poisson^{n,\sampleprob}$ be the
set of all Poisson samples from domain $[1,n]$.
Then $\samplefunc_\poisson(\cd, \sampleprob)$ returns an element
$s$ from $\mathcal{F}_\poisson^{n,\samplesize}$ supported by probability
distribution
$\Pr_\sampleprob(\sample)$
and $\samplepass_\poisson(\cd, \sampleprob, \numsamples)$ draws $\numsamples$
elements with replacement from  $\mathcal{F}_\poisson^{n,\sampleprob}$
also with probability $\Pr_\sampleprob$.

Note that the size of a Poisson sample is a random variable
and is $\sampleprob n$ is an average size of a sample.
Similar to $\swo$ each Poisson sample contains distinct elements,
while elements can repeat between Poisson samples.
\else
\emph{Poisson Sampling ($\poisson$)}
$\sample$ is constructed
by independently adding each element from $[1,n]$ with probability $\sampleprob$,
that is $\Pr(j \in \sample) = \sampleprob$.
Hence, probability of a sample $\sample$ is $\Pr_\sampleprob(\sample)= \sampleprob^{|s|}(1-\sampleprob)^{n-|s|}$.
Let $\mathcal{F}_\poisson^{n,\sampleprob}$ be the
set of all Poisson samples from domain $[1,n]$.
Then, $\samplepass_\poisson(\cd, \sampleprob, \numsamples)$ draws~$\numsamples$
elements with replacement from $\mathcal{F}_\poisson^{n,\sampleprob}$.
The size of a Poisson sample is a random variable
and $\sampleprob n$ on average.
\fi
\ifFull

\paragraph{Sampling via Shuffling ($\shuffle$)}
Shuffling is a common approach for obtaining batches for mini-batch
training as it is simple to implement and, once shuffled,
can support parallel and sequential access to the batches.
It proceeds as follows: shuffle $\cd$, split the shuffled data into
batches of size $\samplesize$, and treat each batch as a sample, producing $\numsamples= n/\samplesize$
samples in total.
If more than $\numsamples$ samples are required, the procedure is repeated, producing the next $\numsamples$ samples.

A routine $\samplepass_\shuffle(\cd, \samplesize)$ returns
samples $\sample_1, \sample_2, \ldots, \sample_\numsamples$, each of size $\samplesize$ such that $\cup_{i_\in[1..\numsamples]} \sample_i = \cd$.
Function
$\samplefunc_\poisson(\cd, \sampleprob)$ is defined
using $\samplepass_\shuffle(\cd, \samplesize)$
where samples are saved and returned one at a time. After $\numsamples$ calls,
$\samplepass_\shuffle(\cd, \samplesize)$ is called again.
Similar to $\swo$ or $\poisson$, each sample
contains distinct elements, however in contrast to $\swo$ or $\poisson$,
a sequence of $\numsamples$ samples (i.e., samples from from a single shuffle)
always contain distinct elements between the samples.
\else
\emph{Sampling via $\shuffle$}
is common for obtaining mini-batches for SGD in practice.
It shuffles $\cd$ and splits it in
batches of size $\samplesize$.
If more than $\numsamples$ samples are required, the procedure is repeated.
Similar to $\swo$ or $\poisson$, each sample
contains distinct elements, however in contrast to them,
a sequence of $\numsamples$ samples contain distinct elements between the samples.
\fi

\section{Privacy via Sampling and Differential privacy}

\label{sec:dpsampling}
Privacy amplification of differential privacy captures
the relationship of performing
analysis over a sample vs.~whole dataset.
Let $\mech$ be a randomized mechanism that
is $(\epsilon,\delta)$-DP
and let $\samplefunc$ be a random sample
from dataset $\cd$ of size $\sampleprob n$,  where {$\sampleprob < 1$}
is a sampling parameter.
Let $\mech' = \mech \circ \samplefunc$ be a mechanism that
applies $\mech$ on a sample of $\cd$.
Then, informally, $\mech'$ is $(O(\sampleprob \epsilon), \gamma\delta)$-DP~\cite{Beimel2014,
Li:2012:SAD:2414456.2414474}.

\paragraph{Sampling} For Poisson and sampling without replacement
$\eps'$ of $\mech'$
is  $\log(1 + \sampleprob (e^\eps -1))$~\cite{Li:2012:SAD:2414456.2414474}
and $\log(1 + \samplesize/n(e^\eps - 1))$~\cite{balle_subsampling}, respectively.
We refer the reader to
Balle~\emph{et al.}~\cite{balle_subsampling}
who provide a unified framework for
studying amplification of these
sampling mechanisms.
Crucially all amplification results assume that \emph{the sample is hidden during the analysis
as otherwise amplification results cannot hold}. That is, if the
keys of the elements of a
sample are revealed, $\mech'$ has the same~$(\eps,\delta)$ as~$\mech$.

Privacy loss of executing a sequence of~DP mechanisms
can be analyzed using several
approaches.
Strong composition theorem~\cite{privacybook} states that running $T$
$(\eps, \delta)$-mechanisms 
would be
$(\eps\sqrt{2T \log(1/\delta'')} + T \epsilon (e^\eps - 1),T\delta + \delta'')$-DP,
$\delta''\ge 0$.
Better bounds can be obtained
if one takes advantage of the underlying DP mechanism.
Abadi~\emph{et al.}~\cite{abadi}
introduce the moments accountant
that leverages the fact that $\mech'$
uses Poisson sampling and applies Gaussian noise
to the output.
They obtain $\eps'$ = $O(\sampleprob \eps \sqrt{T})$,~$\delta' = \delta$.

\paragraph{Shuffling}
Analysis of differential private parameters
of $\mech'$
that operates on samples
obtained from shuffling is different.
Parallel composition by McSherry~\cite{McSherry:2009:PIQ:1559845.1559850}
can be seen as the privacy ``amplification'' result
for shuffling.
It states that running
$T$ algorithms in parallel
on \emph{disjoint} samples of the dataset
has $\epsilon' = \max_{i\in[1,T]} \epsilon_i$
where $\epsilon_i$ is the parameter of the $i$th mechanism.
It is a significantly better result
than what one would expect from
using DP composition theorem,
since it relies on the fact that samples are disjoint.
If one requires multiple passes over a dataset
(as is the case with multi-epoch training),
strong composition theorem can be used
with parallel composition.

\paragraph{Sampling vs.~Shuffling DP Guarantees}
We bring the above results together in Table~\ref{tbl:epstheor} to compare the parameters
of several sampling approaches.
As we can see, sampling based approaches for general DP mechanisms
give an order of $O(\sqrt{\samplesize/n})$ smaller
epsilon than shuffling based approaches.
It is important to note that sampling-based approaches
assume that the indices (or keys) of the dataset elements used
by the mechanism remain secret.
In~\S\ref{sec:oblsampl} we develop algorithms with this property.

\begin{table}[t]
\small
\begin{center}
\caption{Parameters $(\eps',\delta')$ of mechanisms that
use $(\eps,\delta)$-DP mechanism $\mech$ with one of the three sampling techniques
with a sample of size $\samplesize$
from a dataset of size $n$
and $\sampleprob = \samplesize/n$
for Poisson sampling, where $\eps' <1$, $\delta'' > 0$,
$T$ is the number of samples in an epoch,
$E$ is the number of epochs.}
\begin{tabular}{ r || c | c }
\multirow{2}{*}{\textbf{Sampling mechanism}}& \multicolumn{2}{c}{\bf \# analyzed samples of size $\samplesize$} \\
\cline{2-3}
& $T \le n/\samplesize$ & $ T = E n/\samplesize$, $E \ge1$ \\
\hline
\hline
Shuffling & $\eps,\delta$ & $O(\eps\sqrt{E \log(1/\delta'')}),E\delta + \delta'')$   \\
\cline{2-3}
Poisson, SWO &  \multicolumn{2}{c}{$O(\eps\sampleprob\sqrt{T \log(1/\delta'')}),T\sampleprob\delta + \delta'')$}  \\
\hline
Poisson \& Gaussian distribution~\cite{abadi} & \multicolumn{2}{c}{$O(\gamma \eps \sqrt{T}),\delta$}  \\
\end{tabular}
\label{tbl:epstheor}
\end{center}
\end{table}

\paragraph{Differentially private SGD}
We now turn our attention to a differentially private mechanism for
mini-batch stochastic gradient descent computation.
The mechanism is called NoisySGD~\cite{Bassily:2014:PER:2706700.2707412, 6736861}
and when applied instead of non-private mini-batch SGD
allows for a release of a machine learning model with
differential privacy guarantees on the training data.
For example, it has been applied
in Bayesian learning~\cite{pmlr-v37-wangg15}
and
to train deep
learning~\cite{abadi, DBLP:conf/iclr/McMahanRT018, dpmp_dl}
and
logistic regression~\cite{6736861} models.

\label{sec:noisysgd}

It proceeds as follows.
Given a mini-batch (or sample) the gradient of every element
in a batch
is computed and the L2 norm of the gradient is clipped
according to a clipping parameter $C$.
Then a noise is added to the sum
of the (clipped) gradients of all the elements
and the result is averaged over
the sample size.
The noise added to the result is from Gaussian
distribution parametrized with~$C$
and a noise scale parameter $\sigma$: $\mathcal{N}(0, \sigma^2 C^2)$.
The noise is proportional to the
sensitivity of the sum of gradients
to the value of each element in the sample.
The amount of privacy budget that a single batch
processing,
also called subsampled Gaussian mechanism, incurs depends on the parameters of the noise distribution
and how the batch is sampled.
The model parameters are iteratively updated after every NoisySGD processing.
The number of iterations and
the composition mechanism used to keep track
of the privacy loss determine
the DP parameters of the overall training process.

Abadi~\emph{et al.}~\cite{abadi} report analytical results assuming Poisson sampling
but use
shuffling to obtain the samples in the evaluation.
Yu~\emph{et al.}~\cite{dpmp_dl} point out the discrepancy between analysis
and experimental results in~\cite{abadi}, that is, the
reported privacy loss is underestimated due to the use of shuffling.
Yu~\emph{et al.} proceed to analyze shuffling and sampling but also use shuffling
in their experiments.
Hence, though analytically Poisson and SWO sampling
provide better privacy parameters
than shuffling, there is no evidence
that the accuracy is the same between the approaches in practice.
We fill in this gap in \S\ref{sec:evaluation} and show that
for the benchmarks we have tried it is indeed the case.

\section{Oblivious Sampling Algorithms}
\label{sec:oblsampl}
In this section, we develop data-oblivious algorithms for generating
a sequence of samples from a dataset~$\cd$ such that the total number of
samples is sufficient for a single epoch of a training algorithm.
Moreover, our algorithms will access the original
dataset at indices that appear to be independent
of how elements are distributed across the samples.
As a result, anyone observing their memory accesses
cannot
identify, how many and which samples
each element of $\cd$ appears in.

\ifFull
In this section we develop new oblivious sampling algorithms
for sampling w/o replacement and Poisson sampling.  
\fi

\subsection{Oblivious sampling without replacement ($\swo$)}
\label{sec:oblswo}

We introduce a definition of an \emph{oblivious sampling algorithm}:
oblivious $\samplepass_\swo(\cd, \samplesize)$ is a randomized
algorithm that returns $\numsamples = n/\samplesize$ $\swo$
samples from $\cd$ and produces memory accesses
that are indistinguishable
between invocations for all datasets of size $n = |\cd|$
and generated samples.

As a warm-up, consider the following naive way of generating a single $\swo$ sample
of size $\samplesize$
from dataset $\cd$ stored in external memory
of a TEE:
generate~$\samplesize$ distinct random keys from $[1,n]$ and load
from external memory elements of $\cd$ that are stored at those indices.
This trivially reveals the sample to an observer of memory accesses.
A secure but inefficient way would be to load
$\cd[l]$ for $\forall l \in [1,n]$ and, if $l$ matches one of the $\samplesize$
random keys, keep $\cd[l]$ in private memory. This incurs $n$
accesses to generate a sample of size~$\samplesize$.
Though our algorithm will also make a linear number of
accesses to $\cd$, it will amortize this cost
by producing $n/\samplesize$ samples.

The high level description of our secure and efficient 
algorithm for producing $\numsamples$ samples is as follows.
Choose~$\numsamples$ samples from $\mathcal{F}_\swo^{n,\samplesize}$,
numbering each sample with an identifier $1$ to $\numsamples$;
the keys within the samples (up to a mapping) will represent the keys of
elements used in the samples of the output. 
Then, while scanning $\cd$, replicate elements depending on how many samples
they should appear in and associate each replica with its sample id.
Finally, group elements according to sample~ids.

\paragraph*{Preliminaries}
Our algorithm relies on a primitive that can
efficiently draw $\numsamples$ samples from $\mathcal{F}_\swo^{n,\samplesize}$
(denoted via~$\swo.\init(n,\samplesize)$). It also provides a function $\swo.\samplemember(i,j)$
that returns $\True$ if key $j$ is in the $i$th sample and $\False$ otherwise.
This primitive can be instantiated using~$\numsamples$ pseudo-random
permutations~$\rho_i$ over $[1,n]$. Then sample $i$ is defined
by the first $\samplesize$ indices of the permutation, i.e., element
with key~$j$
is in the sample $i$ if $\rho_i(j) \le \samplesize$.
This procedure is described in more detail in Appendix~\S\ref{app:swo}.

We will use $r_j$ to denote the number of samples 
where key $j$ appears in, that is $r_j = |\{i~|~ \samplemember(i,j), \forall i \in [1, \numsamples], \forall j \in [1, n]\}|$.
It is important to note that samples drawn above are used as a template for a valid SWO sampling
(i.e., to preserve replication of elements across the samples).
However, the final samples $\sample_1, \sample_2, \ldots, \sample_\numsamples$ returned by the algorithm
will be instantiated with keys that are determined using
function $\pi'$ (which will be defined later).
In particular, for all samples, if $\samplemember(i,j)$ is true then $\pi'(j) \in \sample_i$.

\paragraph*{Description}
The pseudo-code in Algorithm~\ref{alg:oblswo} provides the details of the method.
It starts with dataset $\cd$ obliviously shuffled according to a random secret permutation $\pi$ (Line~\ref{line:swoshuf1}).
Hence, element~$e$ is stored (re-encrypted) in $\cd$ at index $\pi(e.\key)$.
The next phase replicates elements
such that for every index $j\in[1,n]$ there is an element (not necessarily with key $j$) that is replicated $r_j$
times (Lines~\ref{line:swobegrep}-\ref{line:swoendrep}).
The algorithm maintains a counter~$l$ which keeps the current index of the scan in the array
and $\nexte$ which stores the element read from $l$th index.

\begin{wrapfigure}{tr}{0.5\textwidth}
\vspace{-10pt}
\hspace{5pt}
\begin{minipage}[tr]{0.45\textwidth}
\begin{algorithm}[H]
	\caption{Oblivious $\samplepass_\swo(\cd, \samplesize)$: takes an encrypted dataset $\cd$ and returns $\numsamples=n/\samplesize$ SWO samples of size $\samplesize$, $n=|\cd|$.}
		\label{alg:oblswo}
		\begin{algorithmic}[1]
\STATE $\cd \leftarrow \oshuffle(\cd)$ \label{line:swoshuf1}
\STATE $\swo.\init(n,\samplesize)$ \label{line:swosamples}
\STATE $\mathcal{S} \leftarrow []$, $j \leftarrow 1$, $l \leftarrow 1$, $e \leftarrow \cd[1]$, $\nexte \leftarrow \cd[1]$
\WHILE {$l \le n$} \label{line:swobegrep}
	\FOR{$i \in [1,\numsamples]$}
		\IF{$\swo.\samplemember(i, j)$}\label{line:swoifsample}
			\STATE $\mathcal{S}.\append(\reenc(e),\enc(i))$
			\STATE $l \leftarrow l + 1$
			\STATE $\nexte \leftarrow \cd[l]$
		\ENDIF
	\ENDFOR
	\STATE $e \leftarrow \nexte$
	\STATE $j \leftarrow j + 1$
\ENDWHILE \label{line:swoendrep}
\STATE $S \leftarrow \oshuffle(\mathcal{S})$ 
\STATE $\forall i \in [1,\numsamples]: \sample_i \leftarrow  []$ \label{line:swobeggroup}
\FOR{$p \in S$}
	\STATE $(c_e, c_i) \leftarrow p$, $i \leftarrow \dec(c_i)$
	\STATE $\sample_i \leftarrow \sample_i.\append(c_e)$
\ENDFOR \label{line:swoendgroup}
\STATE Return $\sample_1, \sample_2, \ldots, \sample_\numsamples$
\end{algorithmic}
\end{algorithm}
\end{minipage}
\end{wrapfigure}

Additionally the algorithm maintains element $e$ which is an element that currently is being replicated.
It is updated to $\nexte$
as soon as sufficient number of replicas is reached. The number of times $e$ is replicated depends on the number
of samples element with key $j$ appears in. 
Counter $j$ starts at $1$ and is incremented after element $e$ is replicated
$r_j$ times. At any given time, counter $j$ is an indicator of the number of distinct elements
written out so far. Hence, $j$ can reach $n$ only if every element appears in exactly one sample. On the other hand,
the smallest $j$ can be is $\samplesize$, this happens when all $\numsamples$ samples are identical.

Given the above state, the algorithm reads an element into $\nexte$,
loops internally through $i \in [1..\numsamples]$:
if current key $j$ is in $i$th sample
it writes out an encrypted tuple $(e,i)$
and reads the next element from~$\cd$ into $\nexte$.
Note that $e$ is re-encrypted every time it is written out in order to hide
which one of the elements read so far is being written out.
After the scan, the tuples are obliviously shuffled.
At this point, the sample id $i$ of each tuple is decrypted and
used to (non-obliviously) group elements that belong to the same sample
together, creating the sample output $\sample_1..\sample_\numsamples$~(Lines \ref{line:swobeggroup}-\ref{line:swoendgroup}).

\newcommand{\mapping}{\mathsf{m}}

We are left to derive the mapping $\mapping$ between keys
used in samples drawn in Line~\ref{line:swosamples} and elements
returned in samples $\sample_1..\sample_\numsamples$.
We note that $\mapping$ is not explicitly used during the algorithm
and is used only in the analysis.
From the algorithm we see that
$\mapping(l) = \pi^{-1}(1 + \sum_{j=1}^{l-1} r_j)$,
that is $\mapping$ is derived from $\pi$ with shifts due to replications of preceding keys.
(Observe that if
every element appears only in one sample $\mapping(l) = \pi^{-1}(l)$.)
We show that $\mapping$ is injective and random (Lemma~\ref{lemma:injectmap})
and, hence, $\sample_1..\sample_\numsamples$
are valid $\swo$ samples.

\paragraph{Example}
Let $\cd = \{(1,A)$, $(2,B),(3,C)$, $(4,D),(5,E),(6,F)\}$, where
$(4,D)$ denotes element $D$ at index $4$ (used also as a key),
$\samplesize = 2$, and
randomly drawn samples in $\swo.\init$ are $\{1,4\}$, $\{1,2\}$, $\{1,5\}$.
Suppose $\cd$ after the shuffle is $\{(4,D),$ $(1,A),$ $(5,E),$ $(3,C),$ $(6,F),$ $(2,B)\}$.
Then, after the replication $\mathcal{S}  = \{((4,D),1),$ $((4,D),2),$ $((4,D),3),$ $((3,C),2),$ $((6,F),1),$ $((2,B),3)\}$
where the first tuple $((4,D),1)$ indicates that $(4,D)$ appears in the first sample.

\paragraph{Correctness}
We show that samples returned by the algorithm
correspond to samples drawn randomly from $\mathcal{F}_\swo^{m,n}$.
We argue that samples returned by the oblivious $\samplepass_\swo$
are identical to those drawn truly at random from $\mathcal{F}_\swo^{m,n}$ up
to key mapping $\mapping$
and then show that $\mapping$ is injective and random in Appendix~\ref{app:swo}.
For every key $j$ present in the drawn samples
there is an element with key $\mapping(j)$ that
is replicated $r_j$ times and is associated with
the sample ids of $j$. Hence, returned samples, after being grouped,
are exactly the drawn samples where every key $j$
is substituted with an element with key~$\mapping(j)$.

\ifFull
\begin{lemma}
\label{lemma:injectmap}
Let $\pi$ be a permutation over $n$ elements,
$\forall j \in[1,n], r_j \in [0,\numsamples]$ such that $\sum^n r_j = n$
and $\mathcal{D} = \{j~|~r_j \ge 1\}$.
Let $\mapping(j) = \pi^{-1}(1 + \sum_{l=1}^{l-1} r_j)$.
Then $\mapping$ evaluated on keys in $\mathcal{D}$ is an injective random function
over $[1,n]$.
\end{lemma}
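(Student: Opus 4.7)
The plan is to split the proof into two independent parts: (i) show that on the restricted domain $\mathcal{D}$ the map $\mapping$ is injective with image in $[1,n]$, and (ii) show that, conditioned on the $r_j$'s (which are produced by $\swo.\init$ independently of the oblivious-shuffle permutation $\pi$), $\mapping$ is uniformly distributed over all injections $\mathcal{D} \to [1,n]$. Throughout, write $o_j := 1 + \sum_{l=1}^{j-1} r_l$ so that $\mapping(j) = \pi^{-1}(o_j)$, and read the sum in the statement as a partial sum of the $r_l$'s over $l < j$ (the natural reading of the typo).

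For injectivity, I would argue that $j \mapsto o_j$ is strictly increasing on $\mathcal{D}$. For any $j, j' \in \mathcal{D}$ with $j < j'$, we have $o_{j'} - o_j = \sum_{l=j}^{j'-1} r_l \ge r_j \ge 1$, using $j \in \mathcal{D}$. I would then check that all offsets lie in $[1,n]$: clearly $o_j \ge 1$, and letting $j^\star$ be the largest element of $\mathcal{D}$, we have $o_{j^\star} = 1 + \sum_{l < j^\star} r_l = 1 + (n - r_{j^\star}) \le n$. Thus $\{o_j : j \in \mathcal{D}\}$ is a subset of $[1,n]$ of size exactly $|\mathcal{D}|$, and distinctness is preserved under the bijection $\pi^{-1}$, giving injectivity of $\mapping$ on $\mathcal{D}$.

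For uniform randomness, treat the $r_j$'s (and hence $\mathcal{D}$ and the offsets $o_j$) as fixed, which is valid because they depend only on the independent randomness used by $\swo.\init$ on Line~\ref{line:swosamples}, not on the shuffle permutation $\pi$ from Line~\ref{line:swoshuf1}. A uniformly random permutation of $[1,n]$ has a uniformly random inverse, so for any injection $g : \mathcal{D} \to [1,n]$ the event $\{\pi^{-1}(o_j) = g(j) \text{ for every } j \in \mathcal{D}\}$ has probability $(n - |\mathcal{D}|)!\,/\,n!$, which does not depend on $g$. Hence $\mapping|_{\mathcal{D}}$ is uniformly distributed over the set of injections $\mathcal{D} \to [1,n]$.

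I do not anticipate a serious obstacle: the only subtle point is making sure the offsets are strictly increasing on $\mathcal{D}$ (and therefore distinct elements of $[1,n]$), after which injectivity and uniformity both follow from standard properties of uniform permutations. Care has to be taken that the lemma's "$\pi$ is a permutation" is to be combined with the context in which it is invoked, where $\pi$ is the uniformly random oblivious-shuffle permutation independent of the samples drawn by $\swo.\init$; otherwise "random" in the conclusion would have no source.
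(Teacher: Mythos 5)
Your proposal is correct and follows essentially the same route as the paper's proof: establish that the offsets $1+\sum_{l<j}r_l$ are distinct elements of $[1,n]$ when restricted to $\mathcal{D}$ (via strict monotonicity there and the bound $\sum r_j = n$), then compose with the random permutation $\pi^{-1}$ to get an injective random map. Your version is somewhat more careful than the paper's — in particular you make explicit that strict monotonicity only holds on $\mathcal{D}$ (since $r_l$ may be $0$ elsewhere), and you quantify "random" as uniform over injections with the $(n-|\mathcal{D}|)!/n!$ count and note the needed independence of the $r_j$'s from $\pi$ — but these are refinements of the same argument, not a different one.
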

\begin{proof}
The statement follows from two observations: $\pi^{-1}$ is a permutation
and $\pi^{-1}$ is evaluated only on distinct elements from a set $[1,n]$.

The second observation is true since function from $l$ to $1 + \sum_{j=1}^{l-1} r_j$,
when evaluated on $l\in \mathcal{D}$, is injective
as it is strictly monotonic.
Moreover, $(1 + \sum_{j=1}^{l-1} r_j)\le n$
since $\sum^n r_j =  n$.

Co-domain of $\mapping$
appears independent of its input since
it is a subset of
the output of a random permutation function $\pi$
that has these properties by definition.
\end{proof}
\fi

\ifFull
\textbf{Security}
The adversary observes an oblivious shuffle, a scan where an element is read and an encrypted pair is written,
another oblivious shuffle and then a scan that reveals the sample identifiers. All patterns except for
revealing of the sample identifiers are
independent of the data.
We are left to argue that revealing sample ids and their locations does not reveal information
about the data nor the samples.
First note that there are $\samplesize$ copies of sample ids $1,2,\ldots,\numsamples$ associated with a ciphertext,
hence data-independent.
Second, note that locations of the revealed identifiers are random according
to the permutation chosen in the second shuffle step.
Since the permutation of the shuffles are hidden, the adversary does not learn the location of the tuple before and after the shuffle.
\else
%\vspace{-2pt}
\paragraph{Security and performance}
The adversary observes an oblivious shuffle, a scan where an element is read and an encrypted pair is written,
another oblivious shuffle and then a scan that reveals the sample identifiers. All patterns except for
revealing of the sample identifiers are
independent of the data and sampled keys. We argue security further in~\S\ref{app:swo}.
{Performance} of oblivious $\swo$ sampling is dominated
by two oblivious shuffles and the non-oblivious grouping,
replication scan has linear cost.
Hence, our algorithm produces $\numsamples$ samples in time $O(cn)$
with private memory of size $O(\sqrt[c]{n})$.
Since a non-oblivious version would require $n$ accesses,
our algorithm has a constant overhead for small $c$.
\fi

\ifFull
\textbf{Performance}
Performance of oblivious $\swo$ sampling is dominated
by two oblivious shuffles and the non-oblivious sorting (or grouping) in Lines~\ref{line:swobeggroup}-\ref{line:swoendgroup}
since the replication scan has linear cost.
Hence, our algorithm produces $\numsamples$ in time $O(cn)$
with private memory of size $O(\sqrt[c]{n})$.
Since a non-oblivious version would require $n$ accesses,
our algorithm has a constant overhead for small $c$.
Importantly, neither oblivious shuffle nor sample algorithms
perform random accesses to outsourced
memory. That is, all memory accesses
they produce are deterministic.
\fi

\paragraph{Observations}
We note that if more than $\numsamples$ samples of size $\samplesize = n/\numsamples$
need to be produced, one can invoke the algorithm multiple times using different randomness.
Furthermore, Algorithm~\ref{alg:oblswo} can produce samples
of varying sizes $\samplesize_1, \samplesize_2,.., \samplesize_\numsamples$ ($n = \sum{\samplesize_i}$)
given as an input.
The algorithm itself will remain the same. However,
in order to determine
if~$j$~is in sample $i$ or not, $\mathsf{samplemember}(i,j)$
will check if~$\rho_i(j) \le m_i$ instead of $\rho_i(j) \le m$.

\subsection{Oblivious Poisson sampling}
\label{sec:oblpoi}

Performing Poisson sampling obliviously requires not only hiding
access pattern but also the size of the samples.
Since in the worst case
the sample can be of size $n$, each sample will need to be
padded to $n$ with dummy elements. Unfortunately generating $\numsamples$ samples each padded to size
$n$ is impractical.
Though samples of size $n$ are unlikely, revealing some upper bound on sample
size would affect the security of the algorithms relying on Poisson sampling.

Instead of padding to the worst case, we choose to hide the number of samples
that are contained within an $n$-sized block of data (e.g., an epoch).
In particular, our oblivious Poisson sampling returns~$S$ that consists
of samples $\sample_1, \sample_2, \ldots, \sample_{\numsamples'}$
where $\numsamples'\le \numsamples$ such that  $\sum_{i\in[1,\numsamples']} |\sample_i| \le n$.
The security of sampling relies on hiding $\numsamples'$
and the boundary between the samples,
as otherwise an adversary can estimate sample sizes.

The algorithm, presented in Algorithm~\ref{alg:oblpoisson}, proceeds similar to $\swo$
except every element, in addition to being associated with
a sample id, also stores its position
in final $S$. The element and the sample id are kept private
while the position is used to order the elements.
It is then up to the queries that operate
on the samples inside of a TEE (e.g., SGD computation) to
use sample id while scanning~$S$ to determine
the sample boundaries.
The use of $\samplepass_\poisson$ by the queries
has to be done carefully without revealing
when the sample is actually used as this would reveal the boundary
(e.g., while reading the elements during an epoch, one needs
to hide after which element the model is updated).

\ifFull
We will again assume that
random samples, according to Poisson in this case,
can be drawn efficiently
and describe how in the next section.
\else
We assume that
that samples from $\mathcal{F}_\poisson^{n,\sampleprob}$
can be drawn efficiently
and describe how in Appendix~\S\ref{app:poi}.
\fi
The algorithm relies on two functions that have
access to the samples:
$\getsamplesize(i)$ and $\getsamplepos(i, l)$
which return the size of the $i$th sample and the position
of element $l$ in $i$th sample.
The algorithm uses the former to compute $\numsamples'$ and creates replicas for samples with identifiers from
1 to $\numsamples'$. 
The other changes to the Algorithm~\ref{alg:oblswo} are
that $\mathcal{S}.\append(\enc(e),\enc(i))$ is substituted
with  $\mathcal{S}.\append(\enc(e),\enc(i), \enc(\pos))$ 
where $\pos = \sum_{i' <i} {\getsamplesize(i')} + \getsamplepos(i, l)$.
If the total number of elements in the first $\numsamples'$ samples is less than $n$,
the algorithm appends dummy elements to $\mathcal{S}$.
$\mathcal{S}$ is then shuffled. After that positions $\pos$ can be decrypted and
sorted (non-obliviously) to bring elements from the same samples together.
In a decrypted form this corresponds to samples
ordered one after another sequentially, following with $\dummy$ elements if applicable.

\begin{algorithm}[t]
\begin{algorithmic}[1]
	\caption[]{Oblivious $\samplepass_\poisson(\cd, \sampleprob)$: takes an encrypted dataset $\cd$ and returns Poisson sample(s) with parameter $\sampleprob$, $n=|\cd|$}
	\label{alg:oblpoisson}
\STATE $\cd \leftarrow \oshuffle(\cd)$
\STATE $\poisson.\init(n,\sampleprob)$
\STATE $\mathcal{S} \leftarrow []$
\STATE $j \leftarrow 1$, $l \leftarrow 1$, $e \leftarrow \cd[1]$, $\nexte \leftarrow \cd[1]$
\STATE $\numsamples' \leftarrow 1$, ${\cursize}\leftarrow \poisson.\getsamplesize(1) $
\WHILE {$ \cursize+ \poisson.\getsamplesize(\numsamples' + 1) \le n$ and $\numsamples' + 1 \le \numsamples$}
  \STATE $\numsamples' \leftarrow \numsamples' + 1$
  \STATE $\cursize \leftarrow \cursize +  \poisson.\getsamplesize(\numsamples')$
\ENDWHILE
\WHILE {$j \le \cursize$}
	\FOR{$i \in [1,\numsamples']$}
		\IF{$\poisson.\samplemember(i, l)$}
			\STATE $\pos \leftarrow \sum_{i' < i} \poisson.\getsamplesize(i') + \poisson.\getsamplepos(i,l)$
			\STATE $\mathcal{S}.\append(\reenc(e),\enc(i), \enc(\pos))$
			\STATE $j \leftarrow j + 1$
			\STATE $\nexte \leftarrow \cd[j]$
		\ENDIF
	\ENDFOR
	\STATE $e \leftarrow \nexte$
	\STATE $l \leftarrow l + 1$
\ENDWHILE
\FOR {$j \in [\cursize+1, n]$}
	\STATE $\mathcal{S}.\append(\enc(\dummy),\enc(0), \enc(j))$
\ENDFOR
\STATE $S \leftarrow \oshuffle(\mathcal{S})$
\STATE Decrypt $\pos$ (last part) of every tuple in ${S}$ and use it to sort the encrypted elements\label{line:poisort}\label{line:posreveal}
\STATE Return $S$
\end{algorithmic}
\end{algorithm}

\ifFull
\subsection{Sampling primitives}
\label{sec:sampleprimitives}
Sampling with replacement from domain $[1,n]$ can be instantiated by choosing a random
permutation~$\rho$ over the same domain. Then, the sample is defined
by elements that are mapped to the first $\samplesize$ elements, i.e., element $j$
is in the sample if $\rho(j) \le \samplesize$.
This procedure is described in Figure~\ref{alg:samplmemswo} for $\numsamples$ samples.
During the $\init$ call, random permutations are chosen (e.g.,
in the real implementation this would correspond to choosing
a random seed and then deriving $\numsamples$ seeds
for each permutation).
The second function $\samplemember(i,j)$ returns $\True$
or $\False$ depending on whether
$j$ is in the $i$th sample or not.
\begin{algorithm}[t]
\begin{algorithmic}
	\caption[]{Instantiation of $\swo$ sampling for $\numsamples= n/\samplesize$ samples drawn from $\mathcal{F}_\swo^{n,\samplesize}$}
	\label{alg:samplmemswo}
\STATE $\init(n, \samplesize)$:
choose random permutations with domain $[1,n]$: $\rho, \rho_2, \ldots, \rho_\numsamples$
\STATE $\samplemember(i,j)$: If $\rho_i(j) \le \samplesize$ return $\True$, else $\False$
\end{algorithmic}
\end{algorithm}
\fi

\ifFull
\instantpoisson
\fi

\newcommand{\samplmempalg}{
\begin{algorithm}[h]
\begin{algorithmic}
	\caption[]{Instantiation of $\poisson$ sampling for $\numsamples = n\sampleprob$ samples drawn from $\mathcal{F}_\poisson^{n,\sampleprob}$}
	\label{alg:samplmemp}
\STATE $\init(n,\sampleprob)$:
\STATE ~~ choose random permutations with domain $[1,n]$: $\rho_1, \rho_2, \ldots, \rho_\numsamples$
\STATE ~~  $\forall i \in[1,\numsamples]$, $M_i \leftarrow \Binom(n,\sampleprob)$
\STATE $\samplemember(i,j)$: If $\rho_i(j) \le M_i$ return $\True$, else $\False$
\STATE$\getsamplesize(i)$: return $M_i$
\STATE $\getsamplepos(i, l)$: return $\rho_i(l)$
\end{algorithmic}
\end{algorithm}
}
\ifFull
\samplmempalg
\fi

\section{Experimental results}
\label{sec:evaluation}
The goal of our evaluation is to understand the impact of sampling
on the accuracy of training of neural network models
and their differentially private variants,
\ifFull
However, sampling with replacement or Poisson
sampling incurs smaller epsilon
and hence could be a preferred option in practice
if the performance of sampling mechanisms
is on par with shuffling.
In this section we show that this is the case.
\fi
We show that accuracy of all sampling mechanisms is
the same while shuffling has the highest privacy loss.

%\paragraph{Implementation}
We use TensorFlow v$1.13$
and TensorFlow Privacy library~\cite{dptflink}
for DP training.
We implement non-oblivious
$\swo$ and $\poisson$ sampling mechanisms
since accuracy of the training procedure
is independent of sampling implementation.
We report an average of 5 runs for each experiment.

Our implementation relies on DP optimizer from~\cite{dptflink}
which builds on ideas from~\cite{abadi}
to implement noisySGD as described in \S\ref{sec:noisysgd}.
Note that this procedure is independent of the
sampling mechanism behind how the batch
is obtained.
The only exception is $\poisson$ where
the average is computed using a fixed sample size
($\sampleprob \times n$) vs.~its
real size as for the other two sampling mechanisms.
We set the clipping parameter to 4, $\sigma = 6$, $\delta = 10^{-5}$.
For each sampling mechanism
we use a different privacy accountant
to compute exact total $\eps$ as opposed to
asymptotical guarantees in Table~\ref{tbl:epstheor}.
For shuffling we use~\cite{dpmp_dl,McSherry:2009:PIQ:1559845.1559850};
for Poisson sampling~\cite{dptflink};
and for $\swo$ we implement the approach from~\cite{renyisampl}.
\ifFull
We note that \cite{abadi,dpmp_dl} and \cite{renyisampl}
already show that specialized privacy accountants
are better than strong composition theorem~\cite{privacybook},
hence,we omit
the use of strong composition in our evaluation.
Observe that training itself does not depend on the privacy
accountant. It may depend
only if one uses privacy loss as a stopping criteria
for the training as opposed to the number of epochs.
\fi

\textbf{MNIST}
dataset contains 60,000 train and 20,000 test images of ten digits with
the classification tasks of determining which digit an image corresponds to.
We use the same model architecture as~\cite{abadi} and~\cite{dpmp_dl}.
It is a feed-forward neural network comprising of a single hidden layer with 1000 ReLU units and the output layer is softmax of 10 classes corresponding to the 10 digits. The loss function computes cross-entropy loss.
During training we sample data using shuffling,
sampling without replacement and Poisson.
For the first two we use batch size $\samplesize =600$, $\sampleprob = 0.01$
and $\samplesize =200$, $\sampleprob = 0.003$ in Figure~\ref{fig:eps}.
Each network is trained for 100 epochs.
We report the results in Table~\ref{tbl:acc} (left).
We observe that sampling mechanism does not change accuracy for this benchmark.

\ifFull
For example the test accuracy for MNIST for $\shuffle$,
$\poisson$, and $\swo$ is $97.5\%$, $97.47\%$ and $97.43\%$ respectively.
\fi

\begin{wrapfigure}{r}{0.4\textwidth}
	\centering
	\includegraphics[scale=0.5]{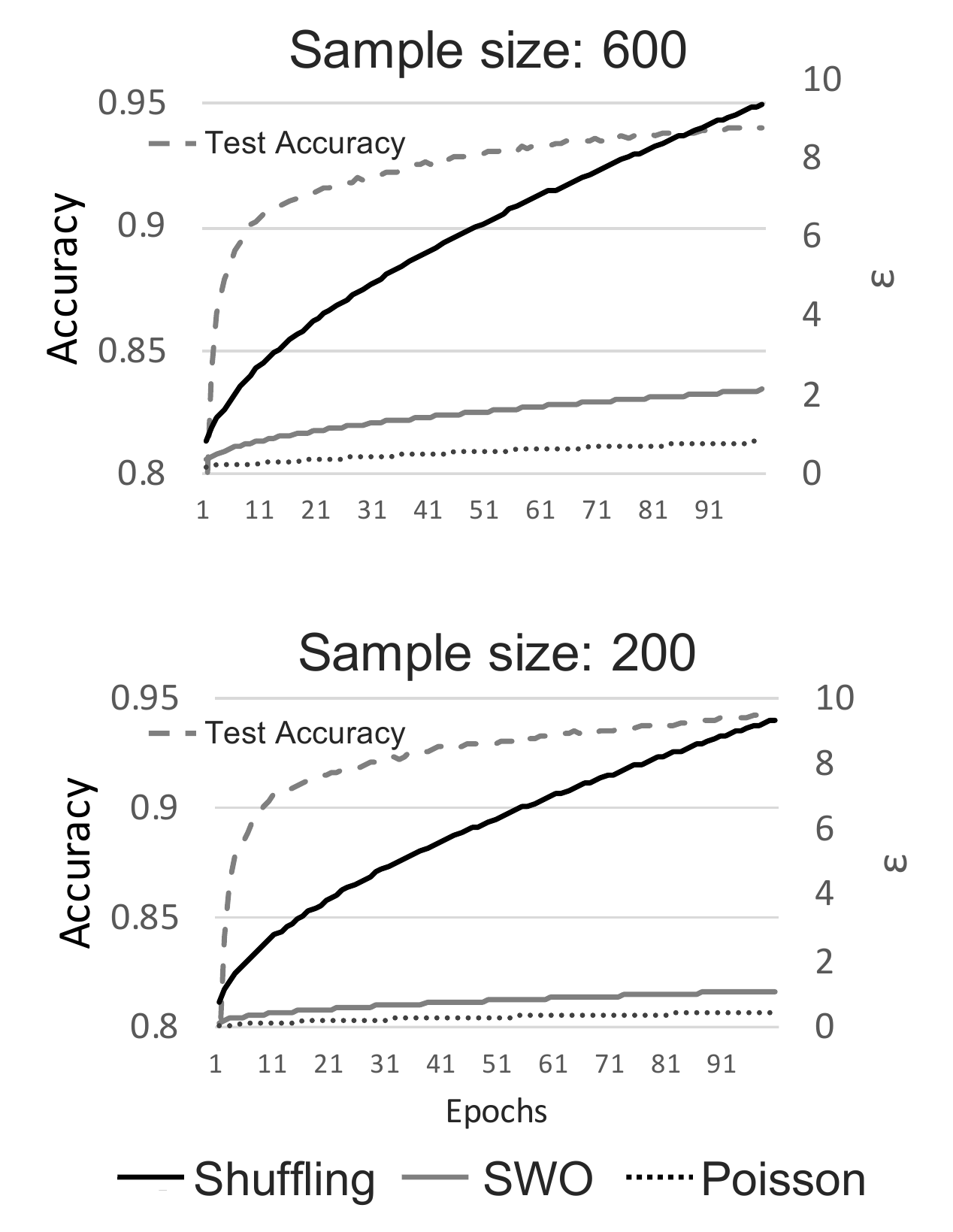}
	\caption{Accuracy and $\eps$ for MNIST over epochs for sample sizes 200 and 600.}
	\label{fig:eps}
		\vspace{-20pt}
\end{wrapfigure}
\textbf{CIFAR-10}
dataset consists of 50,000 training and 10,000 test color images classified into 10
classes~\cite{cifar}.
Each example is a $32\times32$ image with three channels (RGB).
We use the training setup from the TensorFlow tutorial~\cite{TFCifar10} for CIFAR-10
including the data augmentation step.
The same setup was also used in~\cite{abadi}.
The network consists of two convolutional layers followed by two fully connected
layers.
Similar to~\cite{abadi,dpmp_dl} we use a public dataset (CIFAR-100) to train a network with the same
architecture.
We then use the pre-trained network to train the fully connected layers using the CIFAR-10
dataset.
Each network is trained for 100 epochs with sample size of $\samplesize=2000$.
We use the same network setup as related work~\cite{abadi};
but better accuracy can be achieved with deeper networks.
The results for shuffling and sampling w/o replacement are
in Table~\ref{tbl:acc} (right). Similar to MNIST there is no significant difference
between the two.

\begin{table}[t]
\small
\caption{Test (Train) accuracy of
MNIST \& CIFAR10
models trained with samples
generated with $\shuffle$, $\poisson$
and sampling w/o replacement ($\swo$) and their differentially private (DP) variants
with incurred total~$\eps$.}\label{tbl:acc}
\centering
   % \hspace{50pt}
\makebox[0pt][c]{\parbox{1.2\textwidth}{%
    \begin{minipage}[b]{0.65\hsize}\centering
\begin{tabular}{r|c|c|c}
 & Shuffle & $\poisson$ & $\swo$ \\
 \hline
MNIST & {97.5 (98.33)} & {97.47 (98.31)} & {97.43 (98.31)} \\
DP MNIST & 94.06 (94.1) & 94.1 (94.01) & 94.03 (94.05) \\
\hline
$\eps$ & 9.39 & 0.82  & 2.13 \\
\end{tabular}
    \end{minipage}
    \hspace{-65pt}
    \begin{minipage}[b]{0.4\hsize}\centering
\begin{tabular}{r|c|c}
 & Shuffle & $\swo$ \\
 \hline
CIFAR-10 & 79.6 (83.2)  & 79 (82.9) \\
DP CIFAR-10 & 73.4 (72.3)  &  72.5 (71) \\
\hline
$\eps$ & 9.39  & 4.89 \\
\end{tabular}
    \end{minipage}
    \hfill
}}
\vspace{-10pt}
\end{table}

\paragraph{Sampling in differentially private training}
In Table~\ref{tbl:acc} (middle row) we compare the effect of sampling
approaches on DP training.
Similar to results reported in previous work DP training
degrades model performance.
However, accuracy between
sampling approaches is similar.
The difference between
the sampling mechanism is evident however
in the total privacy loss they occur.
The results in last row of Table~\ref{tbl:acc}
show that shuffling incurs the highest privacy
loss for the same number of epochs, in line with asymptotical guarantees in Table~\ref{tbl:epstheor}.
In Figure~\ref{fig:eps} we show that as expected smaller
sample (batch) size has a positive effect
on $\eps$ for sampling.

These results indicate that
if maintaining low privacy loss
is important then $\swo$ and $\poisson$  should be the preferred option
for obtaining batches:
sampling gives smaller privacy loss and same accuracy.

\ifFull
\begin{table}[htp]
\caption{Parameter $\epsilon$ of the differential private training in Table~\ref{tbl:acc}.
for sampling methods.}
\begin{center}
\begin{tabular}{r|c|c|c}
 & Shuffle & $\poisson$ & $\swo$ \\
 \hline
MNIST & 9.39 & 0.82  & 2.13 \\
CIFAR-10 & 9.38  & 1.68 & 4.89  \\
\end{tabular}
\end{center}
\label{tbl:dpeps}
\end{table}%
\fi

\section{Related work}
The use of TEEs
for privacy-preserving data analysis
has been considered in several prior works.
Multi-party machine learning
using Intel SGX and data-oblivious machine learning algorithms
has been described in~\cite{Ohrimenko2016}.
\textsc{Prochlo}~\cite{prochlo} shuffles user
records using TEEs
for anonymization. Secret shuffle allows \textsc{Prochlo} to obtain strong
guarantees from local DP algorithms~\cite{47557} that are applied
to records before the shuffle.
Systems in~\cite{opaque,Ohrimenko2015} consider map-reduce-like computation
for data analysis while hiding access
pattern between computations.
Slalom~\cite{slalom} proposes a way to partially outsource
inference to GPUs from TEEs while maintaining integrity and privacy.

Oblivious algorithms as software protection were first proposed in~\cite{Goldreich:1987:TTS:28395.28416, GoldreichO96}.
Recently, relaxation of security
guarantees for hiding memory accesses
have been considered in the context of differential privacy.
Allen~\textit{et al.}~\cite{DBLP:journals/corr/abs-1807-00736} propose an oblivious differentially-private framework for
designing DP algorithms that operate over data
that does not fit into private memory of a TEE (as opposed to sample-based analysis).
Chan~\textit{et al.}~\cite{Chan:2019:FDO:3310435.3310585} have considered implications
of relaxing the security guarantees of hiding memory accesses from data-oblivious definition to
the differentially-private variant.
Neither of these works looked at the problem of sampling.
In a concurrent and independent work, Shi~\cite{9152809} has developed a data-oblivious
sampling algorithm for generating one random sample from a stream of elements.
In comparison, we consider the setting where $\numsamples$ samples need to be generated
from a dataset of $n$ elements, thereby amortising the cost of processing the dataset across multiple
samples.

We refer the reader to~\cite{privacybook} for more information on differential privacy.
Besides work mentioned in~\S\ref{sec:dpsampling}, we highlight several
other works on the use of sampling for differential privacy.
Sample-Aggregate~\cite{Nissim:2007:SSS:1250790.1250803} 
is a framework based on sampling
where $\numsamples$ random samples are taken
such that in total all samples have $\approx n$ elements,
a function is evaluated on each sample, and $\numsamples$
outputs are then aggregated and reported with noise.
Kasiviswanathan~\emph{et al.}~\cite{DBLP:journals/siamcomp/KasiviswanathanLNRS11}
study concept classes that
can be learnt in differentially private manner based on a sample size and
number of interactions.
DP natural language models in~\cite{DBLP:conf/iclr/McMahanRT018} are trained
using a method of~\cite{abadi} while using data of a single user
as a mini-batch.
Amplification by sampling has been studied for
\Renyi~differential private mechanisms in~\cite{renyisampl}.
Finally, PINQ~\cite{McSherry:2009:PIQ:1559845.1559850},
assuming a trusted curator setting, describes a system for answering database queries
with DP guarantees.

\section*{Acknowledgements}
The authors would like to thank anonymous reviewers for useful feedback that helped improve the
paper. The authors are also grateful to Marc Brockschmidt, Jana Kulkarni, Sebastian Tschiatschek and
Santiago Zanella-B\'eguelin for insightful discussions on the topic of this work.

\bibliographystyle{abbrv}
\bibliography{bibfile,main}

\begin{thebibliography}{10}

\bibitem{cifar}
{CIFAR datasets}.
\newblock \url{http://www.cs.toronto.edu/~kriz/cifar.html}.
\newblock [Online; accessed 18-May-2019].

\bibitem{TFCifar10}
{TensorFlow Tutorial for CIFAR10 CNN}.
\newblock \url{
  https://github.com/tensorflow/models/tree/master/tutorials/image/cifar10},
  2019.
\newblock [Online; accessed 18-May-2019].

\bibitem{abadi}
M.~Abadi, A.~Chu, I.~Goodfellow, H.~B. McMahan, I.~Mironov, K.~Talwar, and
  L.~Zhang.
\newblock Deep learning with differential privacy.
\newblock In {\em Proceedings of the 2016 ACM SIGSAC Conference on Computer and
  Communications Security}, pages 308--318. ACM, 2016.

\bibitem{DBLP:journals/corr/abs-1807-00736}
J.~Allen, B.~Ding, J.~Kulkarni, H.~Nori, O.~Ohrimenko, and S.~Yekhanin.
\newblock An algorithmic framework for differentially private data analysis on
  trusted processors.
\newblock In {\em Conference on Neural Information Processing Systems
  (NeurIPS)}, 2019.

\bibitem{dptflink}
G.~Andrew, S.~Chien, and N.~Papernot.
\newblock {TensorFlow Privacy}.
\newblock \url{https://github.com/tensorflow/privacy}, 2019.
\newblock [Online; accessed 18-May-2019].

\bibitem{balle_subsampling}
B.~Balle, G.~Barthe, and M.~Gaboardi.
\newblock Privacy amplification by subsampling: Tight analyses via couplings
  and divergences.
\newblock In {\em Conference on Neural Information Processing Systems
  (NeurIPS)}, pages 6280--6290, 2018.

\bibitem{Bassily:2014:PER:2706700.2707412}
R.~Bassily, A.~D. Smith, and A.~Thakurta.
\newblock Private empirical risk minimization: Efficient algorithms and tight
  error bounds.
\newblock In {\em Symposium on Foundations of Computer Science (FOCS)}, 2014.

\bibitem{Beimel2014}
A.~Beimel, H.~Brenner, S.~P. Kasiviswanathan, and K.~Nissim.
\newblock Bounds on the sample complexity for private learning and private data
  release.
\newblock {\em Machine Learning}, 94(3):401--437, Mar 2014.

\bibitem{prochlo}
A.~Bittau, U.~Erlingsson, P.~Maniatis, I.~Mironov, A.~Raghunathan, D.~Lie,
  M.~Rudominer, U.~Kode, J.~Tinnes, and B.~Seefeld.
\newblock Prochlo: Strong privacy for analytics in the crowd.
\newblock In {\em ACM Symposium on Operating Systems Principles (SOSP)}, 2017.

\bibitem{Brasser:2017:SGE:3154768.3154779}
F.~Brasser, U.~M\"{u}ller, A.~Dmitrienko, K.~Kostiainen, S.~Capkun, and A.-R.
  Sadeghi.
\newblock Software grand exposure: {SGX} cache attacks are practical.
\newblock In {\em USENIX Workshop on Offensive Technologies (WOOT)}, 2017.

\bibitem{DBLP:journals/corr/abs-1802-08232}
N.~Carlini, C.~Liu, U.~Erlingsson, J.~Kos, and D.~Song.
\newblock {The Secret Sharer: Evaluating and Testing Unintended Memorization in
  Neural Networks}.
\newblock In {\em USENIX Security Symposium}, 2019.

\bibitem{Chan:2019:FDO:3310435.3310585}
T.-H.~H. Chan, K.-M. Chung, B.~M. Maggs, and E.~Shi.
\newblock Foundations of differentially oblivious algorithms.
\newblock In {\em ACM-SIAM Symposium on Discrete Algorithms (SODA)}, pages
  2448--2467, 2019.

\bibitem{privacybook}
C.~Dwork and A.~Roth.
\newblock The algorithmic foundations of differential privacy.
\newblock {\em Foundations and Trends in Theoretical Computer Science},
  9(3-4):211--407, 2014.

\bibitem{DBLP:conf/ccs/FredriksonJR15}
M.~Fredrikson, S.~Jha, and T.~Ristenpart.
\newblock Model inversion attacks that exploit confidence information and basic
  countermeasures.
\newblock In {\em ACM Conference on Computer and Communications Security
  (CCS)}, pages 1322--1333, 2015.

\bibitem{Goldreich:1987:TTS:28395.28416}
O.~Goldreich.
\newblock Towards a theory of software protection and simulation by oblivious
  rams.
\newblock In {\em ACM Symposium on Theory of Computing (STOC)}, 1987.

\bibitem{GoldreichO96}
O.~Goldreich and R.~Ostrovsky.
\newblock Software protection and simulation on oblivious {RAM}s.
\newblock {\em Journal of the ACM (JACM)}, 43(3), 1996.

\bibitem{DBLP:conf/stoc/Goodrich14}
M.~T. Goodrich.
\newblock Zig-zag sort: a simple deterministic data-oblivious sorting algorithm
  running in o(n log n) time.
\newblock In {\em ACM Symposium on Theory of Computing (STOC)}, pages 684--693,
  2014.

\bibitem{sgxcacheattacks}
J.~Gotzfried, M.~Eckert, S.~Schinzel, and T.~Muller.
\newblock Cache attacks on {Intel SGX}.
\newblock In {\em European Workshop on System Security (EuroSec)}, 2017.

\bibitem{sgx}
M.~Hoekstra, R.~Lal, P.~Pappachan, C.~Rozas, V.~Phegade, and J.~del Cuvillo.
\newblock Using innovative instructions to create trustworthy software
  solutions.
\newblock In {\em Workshop on Hardware and Architectural Support for Security
  and Privacy (HASP)}, 2013.

\bibitem{learnprivately}
S.~Kasiviswanathan, H.~Lee, K.~Nissim, S.~Raskhodnikova, and A.~Smith.
\newblock What can we learn privately?
\newblock {\em SIAM Journal on Computing}, 40(3):793--826, 2011.

\bibitem{DBLP:journals/siamcomp/KasiviswanathanLNRS11}
S.~P. Kasiviswanathan, H.~K. Lee, K.~Nissim, S.~Raskhodnikova, and A.~D. Smith.
\newblock What can we learn privately?
\newblock {\em {SIAM} J. Comput.}, 40(3):793--826, 2011.

\bibitem{Li:2012:SAD:2414456.2414474}
N.~Li, W.~Qardaji, and D.~Su.
\newblock On sampling, anonymization, and differential privacy or,
  k-anonymization meets differential privacy.
\newblock In {\em Proceedings of the ACM Symposium on Information, Computer and
  Communications Security}, ASIACCS, 2012.

\bibitem{DBLP:conf/iclr/McMahanRT018}
H.~B. McMahan, D.~Ramage, K.~Talwar, and L.~Zhang.
\newblock Learning differentially private recurrent language models.
\newblock In {\em International Conference on Learning Representations (ICLR)},
  2018.

\bibitem{McSherry:2009:PIQ:1559845.1559850}
F.~D. McSherry.
\newblock Privacy integrated queries: An extensible platform for
  privacy-preserving data analysis.
\newblock In {\em {SIGMOD}}, 2009.

\bibitem{DBLP:journals/corr/abs-1805-04049}
L.~Melis, C.~Song, E.~D. Cristofaro, and V.~Shmatikov.
\newblock Exploiting unintended feature leakage in collaborative learning.
\newblock In {\em IEEE Symposium on Security and Privacy (S\&P)}, 2019.

\bibitem{Nissim:2007:SSS:1250790.1250803}
K.~Nissim, S.~Raskhodnikova, and A.~Smith.
\newblock Smooth sensitivity and sampling in private data analysis.
\newblock In {\em ACM Symposium on Theory of Computing (STOC)}, 2007.

\bibitem{Ohrimenko2015}
O.~Ohrimenko, M.~Costa, C.~Fournet, C.~Gkantsidis, M.~Kohlweiss, and D.~Sharma.
\newblock Observing and preventing leakage in mapreduce.
\newblock In {\em ACM Conference on Computer and Communications Security
  (CCS)}, 2015.

\bibitem{melbshuffle}
O.~Ohrimenko, M.~T. Goodrich, R.~Tamassia, and E.~Upfal.
\newblock The {Melbourne} shuffle: Improving oblivious storage in the cloud.
\newblock In {\em International Colloquium on Automata, Languages and
  Programming (ICALP)}, volume 8573. Springer, 2014.

\bibitem{Ohrimenko2016}
O.~Ohrimenko, F.~Schuster, C.~Fournet, A.~Mehta, S.~Nowozin, K.~Vaswani, and
  M.~Costa.
\newblock Oblivious multi-party machine learning on trusted processors.
\newblock In {\em USENIX Security Symposium}, 2016.

\bibitem{Osvik2006}
D.~A. Osvik, A.~Shamir, and E.~Tromer.
\newblock Cache attacks and countermeasures: the case of {AES}.
\newblock In {\em RSA Conference Cryptographer's Track (CT-RSA)}, 2006.

\bibitem{DBLP:journals/corr/PatelPY17}
S.~Patel, G.~Persiano, and K.~Yeo.
\newblock Cacheshuffle: {A} family of oblivious shuffles.
\newblock In {\em International Colloquium on Automata, Languages and
  Programming (ICALP)}, 2018.

\bibitem{Riondato:2014:FAB:2556195.2556224}
M.~Riondato and E.~M. Kornaropoulos.
\newblock Fast approximation of betweenness centrality through sampling.
\newblock In {\em Proceedings of the 7th ACM International Conference on Web
  Search and Data Mining}, WSDM '14, pages 413--422, New York, NY, USA, 2014.
  ACM.

\bibitem{Riondato:2014:EDA:2663597.2629586}
M.~Riondato and E.~Upfal.
\newblock Efficient discovery of association rules and frequent itemsets
  through sampling with tight performance guarantees.
\newblock {\em ACM Trans. Knowl. Discov. Data}, 8(4):20:1--20:32, Aug. 2014.

\bibitem{9152809}
E.~{Shi}.
\newblock Path oblivious heap: Optimal and practical oblivious priority queue.
\newblock In {\em IEEE Symposium on Security and Privacy (S\&P)}, pages
  842--858, 2020.

\bibitem{DBLP:conf/sp/ShokriSSS17}
R.~Shokri, M.~Stronati, C.~Song, and V.~Shmatikov.
\newblock Membership inference attacks against machine learning models.
\newblock In {\em IEEE Symposium on Security and Privacy (S\&P)}, 2017.

\bibitem{6736861}
S.~{Song}, K.~{Chaudhuri}, and A.~D. {Sarwate}.
\newblock Stochastic gradient descent with differentially private updates.
\newblock In {\em 2013 IEEE Global Conference on Signal and Information
  Processing}, pages 245--248, Dec 2013.

\bibitem{pathoram}
E.~Stefanov, M.~van Dijk, E.~Shi, C.~W. Fletcher, L.~Ren, X.~Yu, and
  S.~Devadas.
\newblock Path {ORAM:} an extremely simple oblivious {RAM} protocol.
\newblock In {\em ACM Conference on Computer and Communications Security
  (CCS)}, 2013.

\bibitem{slalom}
F.~Tramer and D.~Boneh.
\newblock Slalom: Fast, verifiable and private execution of neural networks in
  trusted hardware.
\newblock In {\em International Conference on Learning Representations}, 2019.

\bibitem{Wang:2014:ODS:2660267.2660314}
X.~S. Wang, K.~Nayak, C.~Liu, T.-H.~H. Chan, E.~Shi, E.~Stefanov, and Y.~Huang.
\newblock Oblivious data structures.
\newblock In {\em ACM Conference on Computer and Communications Security
  (CCS)}, 2014.

\bibitem{renyisampl}
Y.~Wang, B.~Balle, and S.~Kasiviswanathan.
\newblock {Subsampled R{\'{e}}nyi Differential Privacy and Analytical Moments
  Accountant}.
\newblock In {\em Artificial Intelligence and Statistics Conference (AISTATS)},
  2019.

\bibitem{pmlr-v37-wangg15}
Y.-X. Wang, S.~Fienberg, and A.~Smola.
\newblock Privacy for free: Posterior sampling and stochastic gradient monte
  carlo.
\newblock In {\em International Conference on Machine Learning (ICML)}, 2015.

\bibitem{sgxsidechannels}
Y.~Xu, W.~Cui, and M.~Peinado.
\newblock Controlled-channel attacks: Deterministic side channels for untrusted
  operating systems.
\newblock In {\em IEEE Symposium on Security and Privacy (S\&P)}, 2015.

\bibitem{dpmp_dl}
L.~Yu, L.~Liu, C.~Pu, M.~Gursoy, and S.~Truex.
\newblock Differentially private model publishing for deep learning.
\newblock In {\em IEEE Symposium on Security and Privacy (S\&P)}, 2019.

\bibitem{opaque}
W.~Zheng, A.~Dave, J.~G. Beekman, R.~A. Popa, J.~E. Gonzalez, and I.~Stoica.
\newblock Opaque: An oblivious and encrypted distributed analytics platform.
\newblock In {\em USENIX Symposium on Networked Systems Design and
  Implementation (NSDI)}, 2017.

\bibitem{47557}
Úlfar Erlingsson, V.~Feldman, I.~Mironov, A.~Raghunathan, K.~Talwar, and
  A.~Thakurta.
\newblock Amplification by shuffling: From local to central differential
  privacy via anonymity.
\newblock In {\em ACM-SIAM Symposium on Discrete Algorithms (SODA)}, 2019.

\end{thebibliography}

\appendix

\clearpage
\section{Details of Oblivious $\swo$ Sampling (\S\ref{sec:oblswo})}
\label{app:swo}

\textbf{Sampling primitive}
A single sample of $\swo$ from domain $[1,n]$ can be instantiated using a
permutation~$\rho$ over $[1,n]$. The sample is defined
by elements that are mapped to the first $\samplesize$ elements, i.e., element $j$
is in the sample if $\rho(j) \le \samplesize$.
This procedure is described in Algorithm~\ref{alg:samplmemswo} for $\numsamples$ samples.
During the $\init$ call, random permutations are chosen (e.g.,
in the real implementation this would correspond to choosing
a random seed and then deriving $\numsamples$ seeds
for each permutation).
Then,
$\samplemember(i,j)$ returns $\True$
or $\False$ depending on whether
$j$ is in the $i$th sample or not.

Observe that each sample defined by the above primitive represents a valid $\swo$ sample.
Let $\sample_i$ be the sample that consists
of the first $\samplesize$ elements of the permutation $\rho_i$.
The probability of choosing a particular sample
is the probability of choosing one of the permutations where
the first $\samplesize$ elements are fixed.
Since there are $(n-\samplesize)!$ permutations
with first $\samplesize$ elements fixed:
the probability of $\sample_i$ is $(n-m)!/n!$
and is $\frac{1}{n}\frac{1}{n-1}\cdots\frac{1}{n-\samplesize+1}$ which is the probability of an $\swo$ sample.

\begin{algorithm}[h]
\begin{algorithmic}
	\caption[]{Instantiation of $\swo$ sampling for $\numsamples= n/\samplesize$ samples drawn from $\mathcal{F}_\swo^{n,\samplesize}$}
	\label{alg:samplmemswo}
\STATE $\init(n, \samplesize)$:
choose random permutations with domain $[1,n]$: $\rho_1, \rho_2, \ldots, \rho_\numsamples$
\STATE $\samplemember(i,j)$: If $\rho_i(j) \le \samplesize$ return $\True$, else $\False$
\end{algorithmic}
%\ifShort\vspace{-5pt}\fi
\end{algorithm}

\textbf{Security of Algorithm~\ref{alg:oblswo}}
The adversary observes an oblivious shuffle, a scan where an element is read and an encrypted pair is written,
another oblivious shuffle and then a scan that reveals the sample identifiers.
Since oblivious shuffle is independent of the content of $\cd$ and the shuffle permutation,
all patterns except for
revealing of the sample identifiers are
independent of the data.
We are left to argue that revealing sample ids and their locations (i.e., indices in the output $S$) does not reveal information
about the data nor the samples.
First note that there are $\samplesize$ copies of sample ids $1,2,\ldots,\numsamples$ associated with a ciphertext,
hence data-independent.
Second, note that locations of the revealed identifiers are random according
to the permutation chosen in the second shuffle step.
Since the permutation of the shuffles are hidden, the adversary does not learn the location of the tuple before and after the shuffle.

\begin{lemma}
\label{lemma:injectmap}
Let $\pi$ be a permutation over $n$ elements,
$\forall j \in[1,n], r_j \in [0,\numsamples]$ such that $\sum^n r_j = n$
and $\mathcal{K} = \{j~|~r_j \ge 1\}$.
For $l \in [1,n]$, let $\mapping(l) = \pi^{-1}(1 + \sum_{j=1}^{l-1} r_j)$.
Then $\mapping$ evaluated on keys in $\mathcal{K}$ is an injective random function
over $[1,n]$.
\end{lemma}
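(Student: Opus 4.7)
The plan is to split the statement into two parts, injectivity and randomness of the output, and handle each separately. The key observation is that $\mapping$ is the composition of two functions: the partial-sum map $\phi(l) = 1 + \sum_{j=1}^{l-1} r_j$ and the inverse permutation $\pi^{-1}$. Since $\pi^{-1}$ is a bijection on $[1,n]$, each property of $\mapping$ restricted to $\mathcal{K}$ reduces to the corresponding property of $\phi$ restricted to $\mathcal{K}$, together with the fact that $\phi(\mathcal{K}) \subseteq [1,n]$ (so that $\pi^{-1}$ can be applied).

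First I would argue that $\phi$ restricted to $\mathcal{K}$ is strictly increasing: for two indices $l < l'$ in $\mathcal{K}$ we have $\phi(l') - \phi(l) = \sum_{j=l}^{l'-1} r_j \ge r_l \ge 1$ since $l \in \mathcal{K}$. Strict monotonicity immediately gives injectivity of $\phi$ on $\mathcal{K}$, and composing with the bijection $\pi^{-1}$ preserves injectivity, so $\mapping$ is injective on $\mathcal{K}$. Next I would verify the range condition: for any $l \in [1,n]$, $\phi(l) = 1 + \sum_{j=1}^{l-1} r_j = 1 + n - \sum_{j=l}^{n} r_j$. If $l \in \mathcal{K}$ then $r_l \ge 1$, so $\sum_{j=l}^{n} r_j \ge 1$ and hence $\phi(l) \le n$; together with the obvious $\phi(l) \ge 1$, this shows $\phi(\mathcal{K}) \subseteq [1,n]$, so $\pi^{-1} \circ \phi$ is well-defined there.

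For the randomness claim, the point is that $\phi$ is a deterministic function of the $r_j$'s and does not depend on $\pi$. Thus $\mapping(\mathcal{K}) = \pi^{-1}(\phi(\mathcal{K}))$ is the image of a fixed $|\mathcal{K}|$-element subset $T = \phi(\mathcal{K}) \subseteq [1,n]$ under the uniformly random permutation $\pi^{-1}$. Since the distribution of $\pi^{-1}$ restricted to any fixed subset of size $|T|$ is a uniformly random injection into $[1,n]$, the tuple $(\mapping(l))_{l \in \mathcal{K}}$ (ordered by $l$) is distributed as a uniformly random injection from $\mathcal{K}$ into $[1,n]$, which is exactly what is meant by ``injective random function over $[1,n]$.''

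The only subtle step will be being careful about what ``random'' means in the lemma statement, i.e., showing that the distribution of $\mapping|_\mathcal{K}$ depends only on $|\mathcal{K}|$ and not on the particular values of the $r_j$'s, so that no information about the multiplicities leaks through $\mapping$. This follows because $\phi$ maps $\mathcal{K}$ injectively into $[1,n]$ regardless of the specific $r_j$'s, and uniform random permutations are invariant under relabeling of any fixed $|\mathcal{K}|$-subset of their domain. I do not foresee any real obstacle beyond making this invariance argument precise.
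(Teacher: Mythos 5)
Your proposal is correct and follows essentially the same route as the paper's proof: decompose $\mapping$ as $\pi^{-1}$ composed with the partial-sum map, get injectivity from strict monotonicity of the partial sums on $\mathcal{K}$, verify the image lies in $[1,n]$ using $\sum r_j = n$, and derive randomness from $\pi$ being a uniformly random permutation applied to a fixed subset. Your write-up is somewhat more careful than the paper's (in particular, in restricting the monotonicity and upper-bound arguments to $l \in \mathcal{K}$, without which $\phi(l)$ could equal $n+1$), but the underlying argument is the same.
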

\begin{proof}
The statement follows from two observations: $\pi^{-1}$ is a permutation
and $\pi^{-1}$ is evaluated only on distinct elements from a set $[1,n]$.

The second observation is true since the mapping from $l$ to $1 + \sum_{j=1}^{l-1} r_j$,
when evaluated on $l\in[1,n]$, is injective
as it is strictly monotonic.
Moreover, $(1 + \sum_{j=1}^{l-1} r_j)\le n$
since $\sum^n r_j =  n$.

Co-domain of $\mapping$
appears independent of its input since
it is a subset of
the output of a random permutation function $\pi$
that has these properties by definition.
\end{proof}

\section{Details of Oblivious $\poisson$ Sampling (\S\ref{sec:oblpoi})}
\label{app:poi}

\textbf{Sampling primitive}
Instantiation of Poisson samples (Algorithm~\ref{alg:samplmemp}) is an extension of SWO sampling
that in addition also randomly chooses the size for each sample, $M_i$.
Recall that for SWO the sample size is fixed ($\samplesize$)
while Poisson sampling takes $\sampleprob$ as a parameter
and adds an element to the sample with probability $\sampleprob$.
Since the size of a Poisson sample is a random variable $\Binom(n, \sampleprob)$,
for each sample we draw a random variable
from this Binomial distribution and use it to determine
the sample size.

Observe that each sample defined by the above primitive represents a valid Poisson sample.
Let $\sample_i$ be the sample that consists
of the first $M_i$ elements of the permutation $\rho_i$.
The probability of choosing $\sample_i$
is the probability of the Binomial random variable being $M_i$
and then choosing a permutation where
the first $M_i$ elements are fixed.
The probability of the former is ${n \choose {M_i}} {\gamma^{M_i}}{(1-\gamma)}^{M_i}$.
Then the probability of $\sample_i$ is
${n \choose {M_i}} \gamma^{M_i}(1-\gamma)^{M_i} (n-M_i)!/n!
=\gamma^{M_i}(1-\gamma)^{M_i}/M_i!$
and is $\gamma^{M_i}(1-\gamma)^{M_i}$ if the element
order within the sample is not relevant.
Hence, samples produced by Algorithm~\ref{alg:samplmemp}
are distributed as $\poisson$ samples
of the corresponding sizes.

\samplmempalg

\textbf{Analysis}
Performance of oblivious Poisson sampling is dominated
by two oblivious shuffles and the non-oblivious sorting in Line~\ref{line:poisort}
since the replication scan is linear.

Security of Algorithm~\ref{alg:oblpoisson} follows that of $\swo$
except it requires $\numsamples'$ to be hidden
from an adversary. In particular, the adversary observes
two shuffles and a scan where one element is read and one written out
(hence, data-independent).
The content of the elements written out is encrypted or re-encrypted
to hide which elements
read from external memory are written out.
The total size of all the samples ($\cursize$) is protected
by padding the output to $n$. Hence, the adversary observes
only positions of every tuple (Line~\ref{line:posreveal}).
However, by construction position values are 1 to $n$ and are randomly shuffled,
hence, they are independent of the actual samples.
The sample boundary, which can be determined from the middle part of the tuples, $\enc(i)$,  in $\mathcal{S}$,
is encrypted and not revealed to the adversary.
Note that tuples with $\enc(0)$ denote padded dummy elements and do not belong to any sample.

\end{document}